\begin{document}

\theoremstyle{plain}
\newtheorem{theorem}{Theorem}
\newtheorem{corollary}[theorem]{Corollary}
\newtheorem{lemma}[theorem]{Lemma}
\newtheorem{claim}[theorem]{Claim}

\theoremstyle{definition}
\newtheorem{definition}[theorem]{Definition}
\newtheorem{example}[theorem]{Example}
\newtheorem{conjecture}[theorem]{Conjecture}

\theoremstyle{remark}
\newtheorem{remark}[theorem]{Remark}

\title{A ternary square-free sequence avoiding factors equivalent to $abcacba$}
\author{James Currie\\
Department of Mathematics and Statistics\\
University of Winnipeg\\
Winnipeg, Manitoba\\
Canada R3B 2E9\\
\texttt{j.currie@uwinnipeg.ca}}
\maketitle
\begin{abstract} We solve a problem of Petrova,
finalizing the classification of letter patterns avoidable by ternary square-free words; we show that there is a ternary square-free word avoiding letter pattern $xyzxzyx$. In fact, we
\begin{itemize}
\item characterize all the (two-way) infinite ternary square-free words avoiding letter pattern $xyzxzyx$
\item characterize the lexicographically least (one-way) infinite ternary square-free word avoiding letter pattern $xyzxzyx$
\item show that the number of ternary square-free words of length $n$ avoiding letter pattern $xyzxzyx$ grows exponentially with $n$.
\end{itemize}.
\end{abstract}
\section{Introduction}
A theme in combinatorics on words is {\bf pattern avoidance}. A word $w$ {\bf encounters} word $p$ if $f(p)$ is a factor of $w$ for some non-erasing morphism $f$. Otherwise $w$ {\bf avoids} $p$. A standard question is whether there are infinitely many words over a given finite alphabet $\Sigma$, none of which encounters a given pattern $p$. Equivalently, one asks whether an $\omega$-word over $\Sigma$ avoids $p$.

The first problems of this sort were studied by Thue \cite{thue,thue2} who showed that there are infinitely many words over $\{a,b,c\}$ which are {\bf square-free} -- i.e., do not encounter $xx$. He also showed that over $\{a,b\}$ there are infinitely many {\bf overlap-free} words -- which simultaneously avoid $xxx$ and $xyxyx$. Thue also introduced a variation on pattern avoidance by asking whether one could simultaneously avoid squares $xx$ and factors from a finite set. For example, Thue showed that infinitely many words over $\{a,b,c\}$ avoid squares, and also have no factors $aba$ or $cbc$.

In combinatorics, once an existence problem has been solved, it is natural to consider stronger questions: characterizations, enumeration problems and extremal problems. Since Thue, progressively stronger questions about pattern-avoiding sequences have been asked and answered: 
\begin{itemize}
\item Gottschalk and Hedlund \cite{gottschalk} characterized the doubly infinite binary words avoiding overlaps.
\item How many square-free words of length $n$ are there over $\{a,b,c\}$? The number of such words was shown to grow exponentially by Brandenburg \cite{brandenburg}.
\item Let ${\bf w}$ be the lexicographically least square-free  $\omega$-word over $\{a,b,c\}$. As the author \cite{allouche} has pointed out,  the method of Shelton \cite{shelton} allows one to test whether a given finite word over $\{a,b,c\}$ is a prefix of ${\bf w}$.
\end{itemize}

Interest in words avoiding patterns continues, and a recent paper by Petrova \cite{petrova} studied {\bf letter pattern avoidance} by ternary square-free words. A word $w$ over $\{1,2,3\}$ avoids the {\bf letter pattern}  $P\in\{x,y,z\}^*$ if no factor of $w$ is an image of $P$ under an injection from $\{x,y,z\}$ to $\{1,2,3\}$. For example, to avoid the letter pattern $xyzxzyx$, a word $w$ cannot contain any of the factors $1231321, 1321231, 2132312, 2312132, 3123213$ and $3213123$. Regarding this particular letter pattern, Petrova remarks at the end of her paper that `(p)roving its avoidance will finalize the classification of letter patterns avoidable by ternary square-free words.'

In this note, we show that there is a ternary square-free word avoiding letter pattern $xyzxzyx$. In fact, we
\begin{itemize}
\item characterize all the (two-way) infinite ternary square-free words avoiding letter pattern $xyzxzyx$
\item characterize the lexicographically least (one-way) infinite ternary square-free word avoiding letter pattern $xyzxzyx$
\item show that the number of ternary square-free words of length $n$ avoiding letter pattern $xyzxzyx$ grows exponentially with $n$.
\end{itemize}
\section{Preliminaries}
We will assume standard notations from combinatorics on words. For reference see the books by Lothaire \cite{loth97,loth02}. In particular, a word is {\bf square-free} if it has no non-empty factor $xx$. 
Let $S=\{1,2,3\}$, $T=\{a,b,c,d\}$ and $U=\{a,c,d\}$. For an alphabet $\Sigma$, we denote by $\Sigma^*$, the set of all finite words over $\Sigma$; by $\Sigma^\omega$, we denote the $\omega$-words over $\Sigma$, which are infinite to the right; by $\Sigma^{\mathbb Z}$ we denote the ${\mathbb Z}$-words over $\Sigma$, which are doubly infinite. Depending on context, a `word' over $\Sigma$ may refer to a finite word, an $\omega$-word or a ${\mathbb Z}$-word.

We put natural orders on alphabets $S$, $T$ and $U$:
$$1 < 2 < 3\mbox{ and }a<b<c<d.$$ These induce lexicographic orders on words over these alphabets; the definition is recursive: if $w$ is a word and $x,y$ are letters, then $wx<wy$ if and only if $x<y$.

Call a word over $S$ {\bf factor-good} if it has no factor of the form $xyzxzyx$ where $\{x,y,z\}=S$; i.e., the factors 1231321, 1321231, 2132312, 2312132, 3123213, 3213123 are forbidden. Call a word over $S$ {\bf good} if it is square-free and factor-good. Petrova's question is whether there are infinitely many good words.

\section{Results on good words}
Theorem~\ref{existence} and Theorem~\ref{struct} below characterize good ${\mathbb Z}$-words. These turn out to be in 2-to-1 correspondence with square-free 
${\mathbb Z}$-words over $U$.

Let $\pi$ be the morphism on $S^*$ generated by $$\pi(1)=1, \pi(2)=3, \pi(3)=2;$$ thus, this morphism $\pi$ relabels 2's as 3's and vice versa.

Let $f$:$T^*\rightarrow S^*$ be the morphism given by 
$$f(a)=1213, f(b)=123, f(c)=1323, f(d)=1232.$$

Let $g$:$U^*\rightarrow T^*$ be the map where $g(u)$ is obtained from a word $u\in\{a,c,d\}^*$ by replacing each factor $ac$ of $u$ by $abc$, each factor $da$ of $u$ by $dba$ and 
each factor $dc$ of $u$ by $dbc$.

\begin{lemma}\label{factor-good} Suppose $u\in U^*$. Then $f(g(u))$ is factor-good.
\end{lemma}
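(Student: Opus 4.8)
The plan is to reduce the claim to a finite verification, using the fact that the six forbidden factors all have length $7$ while $f$ has short images. First I would record the structural features of $f$ and $g$ that drive everything: (i) $|f(t)|\in\{3,4\}$ for every $t\in T$; and (ii) for any $u\in U^*$, no two $b$'s are adjacent in $g(u)$, and every occurrence of $b$ in $g(u)$ sits inside one of the blocks $abc$, $dba$, $dbc$ --- equivalently, each $b$ is inserted into a single occurrence in $u$ of one of the trigger pairs $ac$, $da$, $dc$. Since every non-$b$ letter of $g(u)$ is a letter of $u$, feature (ii) means that a short factor of $g(u)$ always has a short preimage over $U$ under $g$.

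Next, let $w$ be a factor of $f(g(u))$ with $|w|=7$; the goal is to show $w$ is not one of the forbidden words. Write $f(g(u))$ as the concatenation of the blocks $f(t)$, one for each letter $t$ of $g(u)$, and suppose $w$ meets blocks $B_p,\dots,B_q$, so that $w$ contains $B_{p+1},\dots,B_{q-1}$ entirely. By (i) these interior blocks have total length at least $3(q-p-1)$ but at most $|w|-2=5$, forcing $q-p\le 2$: the factor $w$ meets at most three consecutive blocks. Hence $w$ is a factor of $f(t_1)f(t_2)f(t_3)=f(t_1t_2t_3)$ for some length-$3$ factor $t_1t_2t_3$ of $g(u)$ (or of a shorter such product, which is only easier).

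Then I would show that any length-$3$ factor $t_1t_2t_3$ of $g(u)$ is itself a factor of $g(v)$ for some $v\in U^*$ with $|v|\le 3$, by a short case analysis on the positions of $b$ among $t_1,t_2,t_3$ (of which there are at most two, never consecutive, by (ii)). If no $t_i$ is $b$, take $v=t_1t_2t_3$. If $t_2=b$, then $(t_1,t_3)$ is a trigger pair and $v=t_1t_3$ works. If the single $b$ is at an end, the triggering letter of $u$ must be adjoined to $v$; tracking the trigger pairs one checks that $|v|=3$ always suffices (for instance, $bab$ can occur in $g(u)$ only via $g(dac)=dbabc$). Since $f$ is a morphism, $w$ is then a factor of $f(g(v))$ for some $v\in U^*$ with $|v|\le 3$. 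It remains to verify, for each of the finitely many $v\in U^*$ with $|v|\le 3$, that $f(g(v))$ contains none of $1231321$, $1321231$, $2132312$, $2312132$, $3123213$, $3213123$; this is a direct finite computation.

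The only delicate point is making the three-letter-window case analysis airtight --- keeping precise track of which pairs over $U$ trigger a $b$, hence which windows of $g(u)$ can actually occur and what their minimal $U$-preimages are. The subsequent finite check over all $v$ of length $\le 3$ is the most laborious step, but it is entirely mechanical.
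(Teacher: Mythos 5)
Your proposal is correct and follows essentially the same route as the paper: the paper likewise reduces the claim to checking that $f(g(u'))$ is factor-good for every length-$3$ window $u'\in U^3$, leaving that as a finite computation. You simply spell out in detail the block-counting and $b$-insertion analysis that justifies the reduction, which the paper asserts in a single sentence.
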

\begin{lemma}\label{square-free} The map $f\circ g:U^*\rightarrow S^*$ is square-free: Suppose $u\in U^*$ is square-free. Then so is $f(g(u))$.
\end{lemma}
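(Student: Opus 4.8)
The plan is to prove squarefreeness of $f\circ g$ by a standard ``synchronization + short-square'' argument tailored to the morphism $f$. First I would establish that $f$ is \emph{marked} (or at least uniformly decipherable): every image $f(a)=1213$, $f(b)=123$, $f(c)=1323$, $f(d)=1232$ begins with $1$, and no image other than $f(b)$ occurs as a prefix of another; moreover the letter $1$ appears in $f(x)$ only in the first position (for $a,c,d$) or first position (for $b$), so occurrences of $1$ in $f(w)$ mark block boundaries precisely. This gives a synchronization lemma: if $f(g(u))$ contains a factor $1w1$ with $w\in\{2,3\}^*$, then this factor is one of a short, explicit list of ``inter-block'' patterns, and any longer factor of $f(g(u))$ decomposes uniquely into whole $f$-blocks up to a bounded prefix/suffix.

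Next I would argue by contradiction: suppose $u$ is squarefree but $f(g(u))$ contains a square $ss$. Using synchronization, I would split into cases by the length of $s$. For $|s|$ bounded by some explicit constant (roughly $2\max_x|f(x)|=8$), I would check — this is the routine finite computation I will not grind through — that no such short square can occur, since it would have to sit inside $f(g(v))$ for a short factor $v$ of $u$ of bounded length, and $g$ only inserts $b$'s in a controlled way (between $a,c,d$ letters in the patterns $ac\mapsto abc$, $da\mapsto dba$, $dc\mapsto dbc$), so there are finitely many such $v$ to inspect. For $|s|$ large, the center of the square and its two endpoints each sit at an $f$-block boundary (by the marking/synchronization property, since a sufficiently long $s$ must contain the marker letter $1$ in a position forcing alignment), so $s=f(g(t))$ for some factor $t$ of (an extension of) $u$, whence $f(g(u))$ contains $f(g(tt))$ essentially, forcing $tt$ — hence a square — into $g(u)$, and then back into $u$ once we check that $g$ is itself squarefree-preserving in the relevant sense (a square in $g(u)$ with $u$ squarefree would have to be one of the short exceptional squares created by $b$-insertion, again a finite check).

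The main obstacle I expect is the interaction between $f$ and $g$: $g$ is not a morphism but a finite-state rewriting that inserts $b$'s, so the clean statement ``$f\circ g$ maps squarefree to squarefree'' cannot be reduced to a single morphism's test set. The key technical point is therefore to show that $g(u)$, while possibly containing short squares in $b$-rich regions, never contains a square long enough to survive the synchronization argument after applying $f$ — equivalently, that any square in $f(g(u))$ either is short (finite check) or pulls back to a genuine square in $u$. I would handle this by noting that the inserted $b$'s always appear isolated (no factor $bb$ can be created, and the patterns $ac,da,dc$ that trigger insertion cannot overlap destructively), so the ``de-insertion'' map $T^*\to U^*$ undoing $g$ is well-defined on the image and sends long squares to squares. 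Once both lemmas' test-computations are in place, the contradiction with squarefreeness of $u$ is immediate.

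Alternatively, if the morphism $f$ turns out to be one of the known ``squarefree morphisms'' (it has the right shape — Bean--Ehrenfeucht--McNulty / Crochemore's criterion requires checking that $f$ is squarefree on all squarefree words of length $\le 3$), one can invoke Crochemore's theorem directly for the $f$ part: it suffices that $f(v)$ is squarefree for every squarefree $v\in T^*$ with $|v|\le 3$. Then the only remaining work is to show $g(u)$ is squarefree whenever $u\in U^*$ is squarefree — and since $g$ inserts a single $b$ between certain pairs, $g(u)$ squarefree reduces to checking the finitely many squarefree $u$ of length $\le 3$ over $U$ as well. I would present the proof in this second, cleaner form if the length-$3$ check for $f$ succeeds, falling back to the synchronization argument only if it does not.
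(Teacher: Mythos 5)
Your fallback ``clean'' route is not available, and it is worth seeing concretely why: $f$ is \emph{not} a square-free morphism, and Crochemore's test fails already on square-free words of length $2$. Indeed $f(ac)=12131323$ contains the square $1313$, $f(da)=12321213$ contains $2121$, and $f(bd)=1231232$ contains $123123$; the paper itself uses exactly these facts elsewhere. The composite $f\circ g$ preserves square-freeness only because the image of $g$ never contains the transitions $ac$, $da$, $dc$ (a $b$ is inserted there), nor $bd$, $cb$; so the problem cannot be decoupled into ``$f$ is a square-free morphism'' plus ``$g$ preserves square-freeness.'' What can be salvaged from your second route is the latter half only: since inserted $b$'s are isolated, deleting them maps any square of $g(u)$ to a square of $u$, so $g(u)$ is square-free --- a fact the paper uses implicitly.

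Your primary route has the right overall shape (finite check for short squares over the restricted set of two-letter factors occurring in $g(u)$; block decomposition for long squares), but it rests on a false synchronization premise and skips the step where the real work lies. The letter $1$ is not confined to block-initial positions --- $f(a)=1213$ has an internal $1$ --- so occurrences of $1$ do not mark $f$-block boundaries, and your assertion that a long square $ss$ must have its two endpoints and its center at block boundaries (so that $s=f(g(t))$ exactly) is unjustified; in general it is false, and the dangerous squares are precisely the misaligned ones. A correct argument, as in the paper, writes one copy of the square as $q\,f(v_1\cdots v_{n-1})\,p$ with $q$ a suffix of $f(v_0)$, $p$ a prefix of $f(v_{2n})$, and the center splitting a block as $f(v_n)=pq$. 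Square-freeness of $g(u)$ only yields $v_n\ne v_0$ and $v_n\ne v_{2n}$, and inspection of the four images then forces $v_n=b$; the contradiction finally comes from the insertion rules of $g$, namely that $b$ occurs in $g(u)$ only in the contexts $abc$, $dba$, $dbc$, so $v_{n-1}=v_{2n-1}=d$ would force $v_{2n}=b=v_n$, and $v_{n+1}=v_1=c$ would force $v_0=b=v_n$. Your plan's ``pull the square back to $g(u)$ and then to $u$'' works only after alignment has been established, so as written the long-square case is not closed; you would need to add this boundary analysis (or an equivalent argument) to have a complete proof.
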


\begin{theorem}\label{existence} If ${\bf u}\in U^{\mathbb Z}$ is square-free then $f(g({\bf u}))$is good.
\end{theorem}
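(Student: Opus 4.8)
The plan is to deduce this statement about ${\mathbb Z}$-words from Lemma~\ref{factor-good} and Lemma~\ref{square-free} by the standard observation that both square-freeness and factor-goodness are determined by the finite factors of a word. Precisely, a ${\mathbb Z}$-word $\bf w$ is square-free if and only if every finite factor of $\bf w$ is square-free, and $\bf w$ is factor-good if and only if no finite factor of $\bf w$ equals one of the six forbidden length-$7$ words. So it suffices to show that every finite factor $w$ of $f(g({\bf u}))$ is itself a factor of $f(g(v))$ for some finite factor $v$ of $\bf u$; since $\bf u$ is square-free, such a $v$ is square-free, and then the two lemmas force $w$ to be square-free and to avoid the six forbidden patterns.

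To produce such a $v$ I would peel off the two maps in turn. First push $w$ back through $f$: since $f$ is non-erasing with $|f(x)|\le 4$ for all $x\in T$, the factor $w$ of $f(g({\bf u}))$ lies inside $f(t)$ for some finite factor $t$ of the ${\mathbb Z}$-word $g({\bf u})\in T^{\mathbb Z}$. Then push $t$ back through $g$. The only subtlety here is that $g$ is not a morphism but a local rewriting, inserting a single $b$ into each occurrence of $ac$, $da$, or $dc$; however this rewriting commutes with taking factors up to a bounded boundary correction. Concretely, if $t$ is a finite factor of $g({\bf u})$ and $v$ is the factor of $\bf u$ consisting of the underlying $U$-letters of that occurrence of $t$ together with at most one extra letter of $\bf u$ on each side, then $t$ is a factor of $g(v)$, because the insertion rule inspects only adjacent pairs of letters, all of which then lie inside $v$. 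As $v$ is a finite factor of the square-free word $\bf u$, it is square-free.

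Combining the two steps, $w$ is a factor of $f(g(v))$ for some square-free $v\in U^*$. By Lemma~\ref{square-free} the word $f(g(v))$ is square-free, hence so is its factor $w$; by Lemma~\ref{factor-good} the word $f(g(v))$ is factor-good, so $w$ cannot equal any of the six forbidden words. Since $w$ was an arbitrary finite factor of $f(g({\bf u}))$, this word is square-free and factor-good, i.e., good, which is the assertion of the theorem.

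I expect the only real work to be the bookkeeping in the middle step: formulating and checking the boundary correction that realizes an arbitrary finite factor of $g({\bf u})$ inside $g(v)$ for a finite factor $v$ of $\bf u$, and the analogous (easier) pull-back through the morphism $f$. Everything else is the routine ``locality implies the ${\mathbb Z}$-version'' argument, and no compactness is needed beyond the fact that we only ever reason about finite factors. Alternatively one could package the reduction once as: a ${\mathbb Z}$-word is good iff each of its finite factors is a factor of $f(g(v))$ for some finite $v\in U^*$, and then simply invoke the lemmas; but the factor-by-factor formulation above seems the cleanest to write out.
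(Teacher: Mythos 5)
Your proposal is correct and follows essentially the same route as the paper: the paper's proof of Theorem~\ref{existence} consists precisely of establishing Lemma~\ref{factor-good} and Lemma~\ref{square-free} for finite words, with the passage to ${\mathbb Z}$-words via finite factors left implicit. Your explicit bookkeeping for pulling a finite factor back through $f$ and the local insertion rule $g$ (with the one-letter boundary correction) is just a careful spelling-out of that routine step, and it is sound.
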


\begin{theorem}\label{struct} Let ${\bf w}\in S^{\mathbb Z}$ be good. Exactly one of the following is true:
\begin{enumerate}
\item There is a square-free word ${\bf u}\in U^{\mathbb Z}$ such that ${\bf w}= f(g({\bf u}))$.
\item There is a square-free word ${\bf u}\in U^{\mathbb Z}$ such that ${\bf w}= \pi(f(g({\bf u})))$.
\end{enumerate}
\end{theorem}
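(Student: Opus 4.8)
The plan is to invert $f$ and $g$. Let $\mathbf w\in S^{\mathbb Z}$ be good. A square-free word over a two-letter alphabet has length at most $3$ (a length-$4$ word with no immediate repeat is an alternation $xyxy$, which contains the square $(xy)^2$), so $\mathbf w$ has no factor of length $4$ over $\{2,3\}$; hence $1$ occurs infinitely often in both directions, and between two consecutive occurrences of $1$ there are one, two, or three letters of $\{2,3\}$. Cutting $\mathbf w$ immediately before each $1$ therefore produces a canonical bi-infinite factorization of $\mathbf w$ into \emph{$1$-blocks}, each of the form $1w$ with $w$ square-free over $\{2,3\}$ of length $1$, $2$, or $3$; that is, each $1$-block lies in $\Gamma=\{12,\ 13,\ 123,\ 132,\ 1232,\ 1323\}$. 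The blocks $f(a)=1213$, $f(b)=123$, $f(c)=1323$, $f(d)=1232$ split into $1$-blocks as $12\cdot 13$, $123$, $1323$, $1232$, and the $\pi$-images $1312$, $132$, $1232$, $1323$ split as $13\cdot 12$, $132$, $1232$, $1323$. Inspecting all ways two, resp.\ three, of the blocks $f(a),f(b),f(c),f(d)$ can abut shows that in any $f$-image the subword $132$ occurs only inside a $1$-block $1323$, never as a $1$-block by itself; dually for $123$ inside $1232$ in a $\pi f$-image.

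The core of the proof is a finite case analysis of how consecutive $1$-blocks may be joined in a good word. For each ordered pair, and — where a pair is inconclusive — each ordered triple, of elements of $\Gamma$, I would decide whether it can occur in a good word by inspecting, in a window of bounded length about the junction, every square and every image of $xyzxzyx$; this is the computation of the transition graph on $\Gamma$ whose bi-infinite paths are exactly the $1$-block sequences of good $\mathbb Z$-words. From this graph I must read off two facts: \emph{(i)} a $1$-block equal to $12$ is always immediately followed by a $1$-block equal to $13$, and a $1$-block equal to $13$ is always immediately preceded by a $1$-block equal to $12$ — except in the mirrored situation, where the roles of $12$ and $13$ are exchanged; and \emph{(ii)} no good $\mathbb Z$-word has both a $1$-block equal to $123$ and a $1$-block equal to $132$. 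Some exclusions needed for \emph{(i)} already require a little lookahead (for instance $12\cdot 1323\cdot 12$ contains the forbidden factor $2132312$), and \emph{(ii)} is genuinely non-local, since the two offending blocks could lie far apart; I would prove \emph{(ii)} by propagating along the transition graph the ``phase'' forced by a $1$-block $123$ and exhibiting a square or an image of $xyzxzyx$ at the first place that phase is forced to clash with a $1$-block $132$. This is the step that uses the forbidden pattern, as opposed to mere square-freeness, in an essential way, and I expect it to be the main obstacle.

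Granting \emph{(i)} and \emph{(ii)}: after replacing $\mathbf w$ by $\pi(\mathbf w)$ if necessary, no $1$-block of $\mathbf w$ equals $132$, the $1$-blocks of $\mathbf w$ come from $\{12,13,123,1232,1323\}$, and each $12$ is glued to a following $13$. Fusing these pairs and reading $1213\mapsto a$, $123\mapsto b$, $1323\mapsto c$, $1232\mapsto d$ recovers a word $\mathbf v\in T^{\mathbb Z}$ with $f(\mathbf v)=\mathbf w$; it is the unique such word, since the $1$-parse is canonical and $f$ — by the same parsing argument uniquely decipherable — is injective. One then checks that $\mathbf v$ has no factor $aa,bb,cc,dd$ and no factor $ac,da,dc$, and that every $b$ in $\mathbf v$ occupies one of the contexts $abc$, $dba$, $dbc$: any violation produces, under $f$, a square, a forbidden factor, or a $1$-block configuration already excluded. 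These are exactly the conditions under which the word $\mathbf u$ obtained from $\mathbf v$ by deleting every $b$ lies in $U^{\mathbb Z}$ and satisfies $g(\mathbf u)=\mathbf v$. Finally $\mathbf u$ is square-free: a square in $\mathbf u$ reinflates, after restoring the $b$'s prescribed by $g$ and applying $f$, to a square in $\mathbf w=f(g(\mathbf u))$ — the converse of Lemma~\ref{square-free}, proved the same way, with a little care for $b$'s inserted at the ends of the repeated block. Hence $\mathbf w=f(g(\mathbf u))$ or $\mathbf w=\pi(f(g(\mathbf u)))$ with $\mathbf u\in U^{\mathbb Z}$ square-free.

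It remains to see that exactly one alternative holds. A square-free $\mathbf u\in U^{\mathbb Z}$ uses all of $a,c,d$, and over $\{a,c,d\}$ one cannot avoid all of $ac$, $da$, $dc$ (after $a$ one is forced to $d$, and then $d$ has no legal successor), so $g(\mathbf u)$ contains a $b$; therefore $f(g(\mathbf u))$ has a $1$-block $123$ and $\pi(f(g(\mathbf u)))$ has a $1$-block $132$. By \emph{(ii)} no good $\mathbb Z$-word has both, so alternatives (1) and (2) are mutually exclusive, and the uniqueness of $\mathbf v$ above yields uniqueness of $\mathbf u$ within each alternative. In summary, the bulk of the work — and the place where factor-goodness does its job — is the transition-graph analysis yielding \emph{(i)} and \emph{(ii)}, especially the non-local fact \emph{(ii)}; once that is in hand, recovering $\mathbf v$, then $\mathbf u$, and checking the conditions defining $g(U^{\mathbb Z})$ together with square-freeness of $\mathbf u$ is bookkeeping in the spirit of Lemmas~\ref{factor-good} and \ref{square-free}.
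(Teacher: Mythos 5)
Your outline retraces the paper's own route: cut into $1$-blocks, normalize by $\pi$ using a $123$-versus-$132$ dichotomy, invert $f$, verify the factor conditions that make the pulled-back word an image under $g$, delete the $b$'s, and transfer square-freeness back to ${\bf u}$. The peripheral parts of your argument are sound — in particular your care about the $b$ inserted at the junction when pulling a square of ${\bf u}$ back to a square of ${\bf w}$, and your mutual-exclusivity argument via the forced occurrence of $b$ in $g({\bf u})$, are correct and are precisely the points the published proof treats most tersely.

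The genuine gap is your fact \emph{(ii)}, which you yourself flag as ``the main obstacle'' and then only sketch: you propose to propagate a ``phase'' along the transition graph until it clashes with a $1$-block $132$, but you give no reason the propagation terminates — no bound on how far a $1$-block $123$ constrains the word before either a contradiction or a renewal of the phase must appear — so the intended case analysis is not known to be finite and \emph{(ii)} remains unproved. Note that \emph{(ii)} is exactly Lemma~\ref{1231 or 1321} of the paper (a $1$-block $123$ occurs precisely when $1231$ is a factor, and $132$ precisely when $1321$ is), and the paper's proof supplies the concrete content your sketch is missing: take a shortest factor $1231v1321$, so that it contains exactly one occurrence of $1231$; an exhaustive search shows that good words with prefix $1231$ and no second occurrence of $1231$ extend only finitely far (three maximal ones, of length at most $18$), and none of them contains $1321$. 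Some such explicit finite enumeration with a termination bound is what you must supply; expecting the phase argument to work does not close it. Your fact \emph{(i)} is likewise deferred to an unexecuted check, but that one really is a bounded-window computation (the Claims inside Lemma~\ref{4 blocks}, together with the cyclic-order argument of Corollary~\ref{parse w}, which also disposes of the case in which no $12$- or $13$-block occurs at all), so it is a lesser concern; still, as written, both load-bearing combinatorial facts of your proof are asserted rather than established.
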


We can also characterize the lexicographically least good $\omega$-word:

\begin{theorem}\label{least} The lexicographically least good $\omega$-word is $f(g({\bf u}))$, where ${\bf u}$ is the lexicographically least square-free $\omega$-word over $U$.
\end{theorem}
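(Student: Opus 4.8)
The plan is to reduce the problem, via the structural results already in hand, to the (known) lexicographically least square-free $\omega$-word over $U$. Write $\mathbf u\in U^\omega$ for that word. Since goodness and square-freeness are factorial properties, Lemmas~\ref{factor-good} and~\ref{square-free} and Theorem~\ref{existence} apply to $\omega$-words, so $\mathbf w^\ast:=f(g(\mathbf u))$ is a good $\omega$-word. The set of good $\omega$-words is nonempty and closed in $S^\omega$, so it has a lexicographically least element $\mathbf v$ (pick the smallest available letter at each coordinate; closedness guarantees the result is good), and $\mathbf v\le\mathbf w^\ast$. Everything reduces to proving $\mathbf v\ge\mathbf w^\ast$.

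I would assemble three ingredients. (1) \emph{A forced-prefix check.} Computing the first few letters of $\mathbf u$ gives an explicit prefix $P=f(g(\mathbf u[1..k_0]))$ of $\mathbf w^\ast$ for a suitable small $k_0$. A routine backtracking argument on the tree of good finite words (if a finite word extends to good words of every length then, by K\"onig's lemma, it extends to a good $\omega$-word; otherwise a maximal length is found by finite search) shows that every good $\omega$-word begins with $P$; indeed the opening letters are forced outright --- a good word must begin $1,2,1,3$, the continuation $1213121$ is not even square-free, and so on. In particular $\mathbf v$ begins with $P$. (2) \emph{Order-preservation of $f\circ g$.} If $\mathbf u',\mathbf u''\in U^\omega$ are square-free with $\mathbf u'<\mathbf u''$, then $f(g(\mathbf u'))<f(g(\mathbf u''))$. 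Let $j$ be the first position where the two words differ, say $\mathbf u'[j]=x<x'=\mathbf u''[j]$, and put $q=\mathbf u'[1..j-1]$. Since $q$ is square-free over $\{a,c,d\}$, there are just three cases for its last letter; in each, the images $f(g(q\,x\cdots))$ and $f(g(q\,x'\cdots))$ agree on a common prefix (which includes the $b$ inserted by $g$ at the junction of $q$ and the next letter when one is inserted, since the same $b$ is inserted in both branches) and then disagree at a position where the $x$-branch carries the smaller symbol. This is a short case check using $f(a)=1213$, $f(b)=123$, $f(c)=1323$, $f(d)=1232$. (3) \emph{Desubstitution.} Any good $\omega$-word that begins with $P$ equals $f(g(\mathbf u''))$ for some square-free $\mathbf u''\in U^\omega$. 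This is the $\omega$-word counterpart of Theorem~\ref{struct}: the prefix $P$ is long enough that any good continuation can only be parsed one way --- into $f$-blocks compatible with the $g$-structure, and not into $\pi$-shifted blocks (note $\pi(f(a))=1312\ne 1213=f(a)$, and similarly $\pi(f(b)),\pi(f(c)),\pi(f(d))$ do not begin compatibly with $P$) --- so peeling off one block at a time exhibits the whole word as $f(g(\mathbf u''))$ with $\mathbf u''$ square-free over $U$.

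Granting these, the proof finishes in two lines. By (1), $\mathbf v$ begins with $P$; by (3), $\mathbf v=f(g(\mathbf u''))$ for some square-free $\mathbf u''\in U^\omega$. As $\mathbf u$ is the lexicographically least square-free $\omega$-word over $U$, we have $\mathbf u\le\mathbf u''$, whence $\mathbf w^\ast=f(g(\mathbf u))\le f(g(\mathbf u''))=\mathbf v$ by (2). Together with $\mathbf v\le\mathbf w^\ast$ this gives $\mathbf v=\mathbf w^\ast=f(g(\mathbf u))$.

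The main obstacle is ingredient (3): passing from the classification of good \emph{doubly} infinite words to the required statement about good $\omega$-words. One has to show that the explicit prefix $P$ synchronises any good continuation with the block structure of $f\circ g$ --- i.e.\ that no proper suffix of an $f(g(\cdot))$- or $\pi(f(g(\cdot)))$-factor, continued by block-aligned material, can begin with $P$ --- so that the only good $\omega$-words extending $P$ are genuine $f\circ g$-images; once this is set up, iterating the desubstitution of Theorem~\ref{struct} one block at a time yields the square-free preimage $\mathbf u''\in U^\omega$. Ingredient (2), though only a finite case check, is also essential, and is where the apparent clash between the orders on $U$ and on $S$ (one has $f(a)<f(d)<f(c)$ even though $a<c<d$) is dissolved: the $b$'s inserted by $g$ together with the internal shape of the $f$-blocks make $f\circ g$ order-preserving after all.
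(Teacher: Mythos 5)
Your overall strategy is the same as the paper's: pin down an explicit initial segment by a finite search, desubstitute any good $\omega$-word with that prefix through $f\circ g$ using the structure-theorem machinery, and finish by combining minimality of $\mathbf u$ with order-preservation of $f\circ g$; your ingredient (2) is exactly the paper's three-case analysis (cases $r=a,c,d$ at the first disagreement), including the observation that the $b$'s inserted by $g$ are what reconcile $a<c<d$ with $f(a)<f(d)<f(c)$. Two corrections, though. First, ingredient (1) is misstated: it is false that \emph{every} good $\omega$-word begins with $P$ (for instance $\pi(f(g(\mathbf u)))$ begins $1312\cdots$, and every suffix of a good word is good), and the illustrative sub-claims are also wrong --- a good $\omega$-word need not begin $1213$, and $1213121$ \emph{is} square-free. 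What you need, and what your backtracking/K\"onig argument actually delivers, is only that the \emph{lexicographically least} good $\omega$-word begins with $P$; the paper performs the corresponding finite check by verifying that $p=f(g(ac))=12131231323$ is the lexicographically least good word of length $11$, so that every good $\omega$-word either has prefix $p$ or is lexicographically larger than anything with prefix $p$. Second, ingredient (3), which you flag as the main obstacle, requires less new work than you anticipate: Lemma~\ref{1231 or 1321} is stated for arbitrary good words and Lemma~\ref{4 blocks} is already formulated for $\omega$-words with prefix $1213$, so once the prefix contains $1231$ the factor $1321$ is excluded, the $\pi$-branch disappears, and the parsing/desubstitution argument in the proof of Theorem~\ref{struct} applies verbatim to give $\mathbf t_1=f(g(\mathbf u_1))$ with $\mathbf u_1$ square-free over $U$; no separate synchronization lemma about shifted blocks is needed. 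With (1) restated for the lexicographically least word and (3) routed through those lemmas, your argument coincides with the paper's proof.
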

There are `many' finite good words, in the sense that the number of words grows exponentially with length. For each non-negative integer $n$, let $G(n)$ be the number of good words of length $n$.
\begin{theorem}\label{growth} The number of good words of length $n$ grows exponentially with $n$. In particular, there are positive constants $A$, $B$ and $C>1$ such that $$\sum_{i=0}^n G(i)\ge A+B(C^n).$$
\end{theorem}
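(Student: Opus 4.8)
The plan is to reduce the exponential growth of $G(n)$ to the known exponential growth of square-free words over a three-letter alphabet, via the composite morphism $f\circ g$. By Brandenburg's theorem (cited in the introduction), the number of square-free words of length $m$ over the three-letter alphabet $U=\{a,c,d\}$ grows exponentially; write $s(m)$ for this count, so there are constants $A'$, $B'$, $C'>1$ with $\sum_{i=0}^m s(i)\ge A'+B'(C')^m$. By Lemma~\ref{factor-good} and Lemma~\ref{square-free}, for every square-free $u\in U^*$ the word $f(g(u))$ is square-free and factor-good, hence good. So it suffices to show that distinct square-free words $u$ over $U$ produce distinct good words $f(g(u))$, and to control the length blow-up $|f(g(u))|$ in terms of $|u|$.

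First I would pin down the length estimate. The map $g$ inserts at most one letter $b$ per position of $u$, so $|g(u)|\le 2|u|$; more precisely $|g(u)| = |u| + (\text{number of factors in }\{ac,da,dc\})\le 2|u|$. Since $f$ sends each letter of $T$ to a word of length $3$ or $4$, we get $|f(g(u))|\le 4|g(u)|\le 8|u|$. Thus if $u$ ranges over square-free words over $U$ of length $m$, the good word $f(g(u))$ has length at most $8m$, and in particular every good word of length $\le 8m$ that arises this way is counted in $\sum_{i=0}^{8m}G(i)$.

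Next I would establish injectivity of $u\mapsto f(g(u))$ on square-free words. For this it is enough that $g$ is injective on square-free words over $U$ and that $f$ is injective (indeed a code) on the words $g(u)$ that arise. Injectivity of $g$ is immediate: from $g(u)$ one recovers $u$ by deleting every letter $b$, since $b\notin U$ and $g$ only inserts $b$'s. For $f$, I would check that $f$ is a comma-free or at least uniquely decipherable code on the relevant sublanguage: each of $f(a)=1213$, $f(b)=123$, $f(c)=1323$, $f(d)=1232$ begins with $1$ and the only other occurrences of $1$ inside these blocks are the second letter of $f(a)$; combined with the fact that $g(u)$ never contains $bb$ (and the image language is constrained), one can parse $f(g(u))$ from left to right by locating the block boundaries. (This parsing argument is essentially already needed for Theorem~\ref{struct}, so I would lean on that.) Putting these together: the number of good words of length $\le 8m$ is at least $s(m)$, so
\[
\sum_{i=0}^{8m} G(i)\ \ge\ s(m)\ \ge\ \text{(exponential in }m\text{)}.
\]
A standard change of index then gives the stated bound $\sum_{i=0}^n G(i)\ge A+B C^n$ with $C=(C')^{1/8}$ and suitable $A,B>0$ (handling $n$ not divisible by $8$ by monotonicity of the partial sums).

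The main obstacle is the injectivity/decipherability step: proving rigorously that $f\circ g$ is injective on square-free words over $U$, i.e., that one can unambiguously recover first the $T$-word $g(u)$ from its $f$-image and then $u$ from $g(u)$. Recovering $u$ from $g(u)$ is trivial, but showing $f$ restricted to the image of $g$ is a code requires a short case analysis of how the blocks $1213,123,1323,1232$ can overlap or concatenate — one must rule out "accidental" alternative factorizations. I expect this to be routine (it is the same bookkeeping underlying Lemmas~\ref{factor-good} and~\ref{square-free} and Theorem~\ref{struct}), but it is the step that actually needs to be written out rather than quoted. Everything else — the length bound and the reduction to Brandenburg's exponential lower bound — is elementary arithmetic.
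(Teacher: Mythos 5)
Your proposal follows essentially the same route as the paper: both reduce to Brandenburg's exponential lower bound for square-free words over $U$, use Lemmas~\ref{factor-good} and~\ref{square-free} to get good images under $f\circ g$, bound $|f(g(u))|<8|u|$, and invoke injectivity of $f\circ g$ (which the paper simply asserts, while you sketch the parsing argument you'd write out). The proposal is correct, and the extra care you flag about unique decipherability of $f$ on the image of $g$ is the only detail the paper glosses over.
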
  

\section{Proof of Theorem~\ref{struct}}
The proof of Theorem~\ref{struct} proceeds via a series of lemmas. Suppose that ${\bf w}\in\Sigma^{\mathbb Z}$ is good. Since ${\bf w}$ is square-free, 
$${\bf w}\in\{12,123,1232,13,132,1323\}^{\mathbb Z}.$$ 
\begin{lemma}\label{1231 or 1321}
Let $w$ be a good word. Then either $|w|_{1231}=0$ or $|w|_{1321}=0$.
\end{lemma}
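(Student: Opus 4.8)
The plan is to argue by contradiction: assume $w$ is good, yet $w$ contains both $1231$ and $1321$ as factors. The first point is that an occurrence of $1231$ and an occurrence of $1321$ cannot overlap in $w$. Indeed, comparing the proper nonempty suffixes $1,31,231$ of $1231$ with the proper nonempty prefixes $1,13,132$ of $1321$ (and symmetrically), the only possible nonempty overlap is a single letter $1$; but then $w$ would contain $1231321$ or $1321231$, both forbidden, while an overlap of length $0$ would put $11$ in $w$. Hence, taking an occurrence of each that are at minimal distance and applying $\pi$ if necessary (which interchanges the two factors and preserves goodness), $w$ has a factor of the form $1231\,\sigma\,1321$ with $\sigma$ nonempty and beginning and ending with a letter $\neq 1$.

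Since $w$ is square-free, within any factor the $1$'s are separated by maximal runs over $\{2,3\}$ of length at most $3$, i.e. by runs from $\{2,3,23,32,232,323\}$, so the chosen factor reads
$$1\,2\,3\,1\,\delta_1\,1\,\delta_2\,1\,\cdots\,\delta_t\,1\,3\,2\,1$$
with $t\ge 1$. In terms of the factorization of $w$ over the blocks $\{12,123,1232,13,132,1323\}$, this says $w$ contains a $123$-block, then blocks $C_1,\dots,C_t\in\{12,1232,13,1323\}$, then a $132$-block, then one further block. I would now assemble a short list of configurations impossible in a good word, each obtained by checking that a concrete short word is not square-free or not factor-good. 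From this list one reads off: (i) in a good word a $123$-block can be followed only by $12$ or $1323$, and a $132$-block preceded only by $12$ or $1323$; (ii) among the middle blocks $\{12,1232,13,1323\}$ the legal transitions are precisely those of the complete bipartite digraph with parts $\{12,1232\}$ and $\{13,1323\}$; and (iii) $C_i\neq C_{i+2}$ always, because each possibility $C_i=C_{i+2}$ makes the four-block window $C_iC_{i+1}C_{i+2}C_{i+3}$ spell a square (such as $(1213)^2$, $(123213)^2$, $(131232)^2$, or $(121323)^2$) or contain a forbidden factor ($2132312$ or $3123213$).

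By (i) we get $C_1,C_t\in\{12,1323\}$; by (ii) the sequence $C_1,\dots,C_t$ strictly alternates between $\{12,1232\}$ and $\{13,1323\}$; and then (iii) forces it to be a prefix of a purely periodic word of period $4$, since once $C_1,C_2$ are chosen all the $C_i$ are determined. Consequently, if $t\ge 8$ then $C_1C_2\cdots C_8=(C_1C_2C_3C_4)^2$ is a square occurring in $w$, which is impossible; so $t\le 7$, and combined with $C_t\in\{12,1323\}$ and the period-$4$ shape only finitely many candidate sequences $C_1,\dots,C_t$ survive. For each of them the factor $123\,C_1\cdots C_t\,132\,1$ is an explicit word of length at most $35$, and I would finish by exhibiting in each one a square or one of the six forbidden factors, contradicting the goodness of $w$.

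The main obstacle is the local analysis behind (i)--(iii), and especially the dichotomy (iii): the whole argument hinges on gathering just enough transition information to pin the ``middle'' blocks $C_1,\dots,C_t$ down to a bounded, periodic shape, so that the concluding case check is genuinely finite. None of the individual checks is deep --- each amounts to locating a short square or one of the six forbidden factors inside a word of a dozen or so letters --- but there are enough of them that the real effort lies in organizing them, proving the transition facts once and reusing them, rather than in any single clever step.
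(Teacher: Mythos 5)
Your proposal is correct in outline, and it reaches the lemma by a genuinely different route than the paper. The paper also passes to a shortest factor with prefix $1231$ and suffix $1321$, but then finishes in one stroke: it exhaustively enumerates the good right extensions $1231u$ containing no second occurrence of $1231$, finds that there are finitely many with exactly three maximal ones, and observes that none of the three contains $1321$. You instead extract structure from the gap: minimality forbids the middle blocks $123$ and $132$, your transition facts (i)--(ii) force the middle blocks to alternate between $\{12,1232\}$ and $\{13,1323\}$, and (iii) upgrades this to a period-$4$ pattern, so square-freeness bounds $t\le 7$ and leaves a small explicit list of candidate words of length at most $35$, each of which must be killed by hand. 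What the paper's approach buys is brevity and robustness (one mechanical search, no transition lemmas to state or to get right at the boundaries); what yours buys is a transparent, hand-checkable reason why the gap cannot be long, and transition facts of the same flavour as those the paper proves anyway in Lemma~\ref{4 blocks} and Corollary~\ref{parse w}. Your deferred checks do all succeed, so the plan is executable.

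A few points need tightening if you write it up. First, the exclusion of $123$ and $132$ among $C_1,\dots,C_t$ is exactly where minimality of the distance is used (an inner run $23$ or $32$ creates a closer occurrence of $1231$ or $1321$, or directly the forbidden $1231321$); say so explicitly, since nothing else in your write-up invokes minimality after the overlap discussion. Second, your justification of (iii) via the four-block window $C_iC_{i+1}C_{i+2}C_{i+3}$ breaks down at the right boundary, where $C_{i+3}$ does not exist; there you must use the following $132$-block (which starts with $13$) or note that several cases already die inside three blocks via a forbidden factor. Related to this, your parenthetical list is incomplete: besides $2132312$ and $3123213$, the three-block configurations $1323\cdot12\cdot1323$ and $1232\cdot13\cdot1232$ are excluded by the forbidden factors $2312132$ and $3213123$, and the squares that appear generically are $(1213)^2$, $(1312)^2$, $(123213)^2$, $(132312)^2$ (only the first two letters of the next block are guaranteed, so e.g.\ $(121323)^2$ cannot be assumed). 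Finally, "then one further block" overstates what you know -- only the single letter $1$ ending the occurrence of $1321$ is available -- but that letter is all your exclusions actually use. None of these affects the viability of the argument.
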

\begin{proof} If the lemma is false, then either 
\begin{itemize}
\item $w$ contains a finite factor with prefix $1231$ and suffix $1321$ or
\item $w$ contains a factor with prefix $1321$ and suffix $1231$.
\end{itemize}
Without loss of generality up to relabeling, suppose that $w$ contains a factor with prefix 1231 and suffix 1321. Since it is good, $w$ cannot have 1231321 as a factor. Consider then a shortest factor $1231v1321$ of $w$; thus $|1231v1321|_{1231}=1$. 

Exhaustively listing good words $1231u$ with $|1231u|_{1231}=1$, we find that there are only finitely many, and exactly three which are maximal with respect to right extension: 12312131232123, 123132312131232123, 12313231232123. It follows that one of these is a right extension of $1231v1321$; however, none of the three has 1321 as a factor. This is a contradiction.
\end{proof}
Interchanging 2's and 3's if necessary, suppose that ${\bf w}_{1321}=0.$ Thus
$${\bf w}\in\{12,123,1232,13,1323\}^{\mathbb Z}.$$ 

\begin{lemma}\label{4 blocks}
Suppose ${\bf t}\in 1213\{12,123,1232,13, 1323\}^\omega$ is good. Then $${\bf t}\in \{1213,123,1232, 1323\}^\omega.$$
\end{lemma}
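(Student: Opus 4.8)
The plan is to track the \emph{block structure} of ${\bf t}$. Since ${\bf t}$ is square-free and (by the reduction preceding the lemma) contains no factor $1321$, we may cut ${\bf t}$ immediately before each occurrence of the letter $1$; each of the five blocks $12,123,1232,13,1323$ contains exactly one $1$, at its front, so this yields the unique decomposition of ${\bf t}$ into \emph{atomic blocks}, and that decomposition begins $12\cdot 13\cdot\dots$ (the prefix $1213$). A short finite check of which two-block products are square-free and $1321$-free determines which ordered pairs of atomic blocks may be adjacent: in particular $12$ must be followed by $13$ or $1323$; $13$ must be followed by $12$, $123$ or $1232$; and no atomic block may be followed by a copy of itself. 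I record this as a transition graph $\Gamma$ on the five blocks.

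Next I reformulate the conclusion. Observe that $\{1213,123,1232,1323\}=\{f(a),f(b),f(c),f(d)\}$, and that $f(a)=1213=12\cdot 13$ is the only one of these macro-blocks containing a bare $12$ or a bare $13$. Hence ${\bf t}\in\{1213,123,1232,1323\}^\omega$ is \emph{equivalent} to the statement that, in the atomic decomposition of ${\bf t}$, every occurrence of $12$ is immediately followed by an occurrence of $13$ and every occurrence of $13$ is immediately preceded by an occurrence of $12$; the macro-decomposition is then obtained by coalescing each such $12\cdot 13$ and leaving the remaining atomic blocks alone.

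I then prove this pairing statement by a left-to-right induction on atomic blocks, maintaining the invariant that the prefix read so far lies in $\{1213,123,1232,1323\}^*$, possibly followed by a single trailing $12$ still awaiting its $13$; this holds initially because ${\bf t}$ begins $1213$. For the step, let $Z$ be the next atomic block after a good prefix of the invariant form. If $Z\in\{123,1232,1323\}$, append the corresponding macro-block. If $Z=12$, then $\Gamma$ forces the following atomic block to be $13$ or $1323$, and I rule out $1323$, so that $12\cdot 13=1213$. If $Z=13$, then $\Gamma$ forbids $13$ before $13$, so the atomic block preceding $Z$ is $12$ (done) or one of $123,1232,1323$, and I rule out those three. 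In each bad case I use the invariant — so that the atomic blocks before the bad transition are themselves grouped into complete macro-blocks $f(a),\dots,f(d)$ — to extend the analysis a bounded number of atomic blocks backward and forward until a square (for instance $(1213)^2$ or $(312)^2$) or an image of the letter pattern $xyzxzyx$ (for instance $2312132$ or $3123213$) is forced, contradicting goodness.

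The main obstacle is precisely this last step: none of the bad local configurations $12\cdot 1323$, $123\cdot 13$, $1232\cdot 13$, $1323\cdot 13$ is itself a square or a pattern image, so each must be traced through two or three surrounding macro-blocks before a contradiction surfaces (the deepest witnesses occur only after about fourteen letters). The resulting case tree is finite — squares and $xyzxzyx$-images have bounded length and atomic blocks have length at most $4$, so only a bounded window of atomic blocks around the bad transition ever needs to be inspected — but it is sizeable, and it is cleanest to carry out as the (machine-verifiable) assertion that a suitable finite automaton, recognizing atomic-block sequences that begin with $1213$ and are square-free and $xyzxzyx$-free, has no infinite run in which a $12$ fails to be immediately followed by a $13$ or a $13$ fails to be immediately preceded by a $12$. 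Once this finite verification is complete, the invariant persists through every prefix of ${\bf t}$, and since ${\bf t}$ is infinite to the right we conclude ${\bf t}\in\{1213,123,1232,1323\}^\omega$.
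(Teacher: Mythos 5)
Your skeleton is the same as the paper's: cut ${\bf t}$ into the atomic blocks $12,123,1232,13,1323$ and show that every $12$ is immediately followed by $13$ and every $13$ is immediately preceded by $12$, which is exactly what the paper proves in its Claims before regrouping into $\{1213,123,1232,1323\}$. The genuine gap is that the steps carrying all of the content of the lemma --- ``I rule out $1323$'' after a $12$, and ``I rule out those three'' ($123,1232,1323$) before a $13$ --- are never actually argued. You acknowledge this is ``the main obstacle'' and then replace it with an appeal to a finite automaton search that is neither specified nor executed; the parenthetical witnesses are offered only ``for instance,'' and the description of the search (a ``sizeable'' case tree, witnesses some fourteen letters deep) makes clear the verification was not carried out. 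As written, the proposal is a correct reduction of the lemma to a finite check, not a proof of the lemma.

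For what it is worth, the check is far smaller than you anticipate and needs no backward induction through macro-blocks: it closes within one adjacent atomic block. First, neither $132313$ nor $21232$ can occur in ${\bf t}$, because every completion by the neighbouring block forces a square ($1323131$ or $13231323$ in the first case, $121232$ or $12321232$ in the second). Hence after $12\cdot 1323$ the next block must begin with $12$, and the factor $2132312$ (an image of $xyzxzyx$) appears; and before $1232\cdot 13$ the preceding block must end in $3$, so the factor $3123213$ appears; the remaining bad transitions $123\cdot 13$ and $1323\cdot 13$ die at once from the square $3131$, since the following block begins with $1$. If you substitute these short arguments for the unexecuted automaton check, your induction (whose invariant and block-adjacency graph are fine) goes through.
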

\begin{proof} We prove this via a series of claims:\vspace{.1in}

\begin{claim} Neither of 132313 and 21232 is a factor of ${\bf t}$.\end{claim}
\begin{proof}[Proof of Claim] Since ${\bf t}\in1213\{12,123,1232,13,1323\}^{\omega}$, if 132313 is a factor of ${\bf t}$, then so is one of 1323131 and 13231323, both of which end in squares. This is impossible, since ${\bf t}$ is good.
Similarly, if 21232 is a factor of ${\bf t}$, so is one of 121232 and 12321232, both of which begin with squares.\end{proof}

\begin{claim}\label{121} Suppose that $t12uv$ is a factor of ${\bf t}$, where $t$, $u$, $v\in \{12,123,1232,13,1323\}$. Then $u=13$.\end{claim} 
\begin{proof}[Proof of Claim]Word $u$ must be 13 or 1323; otherwise, $12u$ begins with the square 1212. Suppose $u=1323$. By the previous claim, v must have prefix 12. But then $2uv$ has prefix $2132312=xyzxzyx$, where $x=2$, $y=1$, $z=3$; this is impossible. Thus $u=13$.
\end{proof}
\begin{claim}\label{131}Suppose that $tu13v$ is a factor of ${\bf t}$, where $t$, $u$, $v\in \{12,123,1232,13,1323\}$. Then $u=12$.\end{claim} 
\begin{proof}[Proof of Claim] Word $u$ must end with 2; otherwise, $u13v$ contains the square 3131. Thus $u$ must be 12 or 1232. Suppose $u=1232$. By the first claim, $t$ must have suffix 3. But then $tu13$ has suffix $3123213=xyzxzyx$, where $x=3$, $y=1$, $z=2$; this is impossible. Thus $u=12$.
\end{proof}

We have proved that 12 and 13 only appear in ${\bf t}$ in the context 1213. It follows that ${\bf t}\in\{1213,123,1232,1323\}.$
\end{proof}
\begin{corollary}\label{parse w}
Word ${\bf w}\in \{1213,123,1232, 1323\}^{\mathbb Z}.$
\end{corollary}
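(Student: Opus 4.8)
The plan is to read Corollary~\ref{parse w} off directly from the work already done in the proof of Lemma~\ref{4 blocks}. Recall that at this point ${\bf w}\in\{12,123,1232,13,1323\}^{\mathbb Z}$, and that, since each of these blocks contains exactly one letter $1$ (namely its first), this factorization of ${\bf w}$ into blocks is unique; in particular, every block occurrence in it is preceded by a block and followed by (at least two) blocks.

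The key observation is that the three claims established inside the proof of Lemma~\ref{4 blocks} do not really use the hypothesis that ${\bf t}$ begins with $1213$. The claim that neither $132313$ nor $21232$ is a factor used only that the word is good and is a concatenation of blocks from $\{12,123,1232,13,1323\}$, so that $132313$ (respectively $21232$) can be extended only inside blocks, producing a square; and Claim~\ref{121} and Claim~\ref{131} used only good-ness, the block structure, and that first claim. Hence all three hold verbatim with ${\bf w}$ in place of ${\bf t}$. In particular: if $t12uv$ is a factor of ${\bf w}$ with $t,u,v\in\{12,123,1232,13,1323\}$, then $u=13$; and if $tu13v$ is a factor with $t,u,v\in\{12,123,1232,13,1323\}$, then $u=12$.

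Applying the first of these to every $12$-block of ${\bf w}$ (legitimate since every block of ${\bf w}$ is internal) shows that each $12$-block is immediately followed by a $13$-block; applying the second to every $13$-block shows that each $13$-block is immediately preceded by a $12$-block. Thus the blocks $12$ and $13$ occur in ${\bf w}$ only in adjacent pairs $12\cdot 13=1213$; grouping each such pair and leaving the blocks $123$, $1232$, $1323$ untouched exhibits ${\bf w}$ as an element of $\{1213,123,1232,1323\}^{\mathbb Z}$, as required. (If ${\bf w}$ contains no $12$- or $13$-block at all, it already lies in $\{123,1232,1323\}^{\mathbb Z}\subseteq\{1213,123,1232,1323\}^{\mathbb Z}$.)

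I do not expect a substantive obstacle. The only point needing care is the bookkeeping remark that the arguments for the sub-claims of Lemma~\ref{4 blocks} never invoked the $1213$-prefix, and this is settled by simply rereading those short arguments. If one prefers to treat Lemma~\ref{4 blocks} strictly as a black box, the same conclusion follows by locating a block-aligned occurrence of $1213$ in ${\bf w}$, applying Lemma~\ref{4 blocks} to the $\omega$-suffix of ${\bf w}$ beginning there, and supplementing it with the analogous analysis to the left; but this requires the same insensitivity observation and so is no cleaner.
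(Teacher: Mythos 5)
Your argument is correct, and its engine is the same as the paper's: the paper's own proof of this corollary also concludes by noting that, ``as in the proof of Claims~\ref{121} and \ref{131}, factors 12 and 13 can only occur in ${\bf w}$ in the context 1213,'' i.e.\ exactly your observation that those claims use only goodness and the block structure over $\{12,123,1232,13,1323\}$ (every block in a ${\mathbb Z}$-word being two-sidedly extendable), not the $1213$-prefix of ${\bf t}$. The one genuine difference is how the case with no $12$- or $13$-block is treated. You dismiss it in a parenthesis, since $\{123,1232,1323\}^{\mathbb Z}\subseteq\{1213,123,1232,1323\}^{\mathbb Z}$; this is legitimate and makes the proof of the stated corollary shorter. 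The paper instead rules that case out outright: setting $A=1323$, $B=1232$, $C=123$, it checks that $CB$ and $AC1$ contain squares and $BA12$ contains the forbidden factor $2132312$, forcing the cyclic order $A\rightarrow B\rightarrow C\rightarrow A$ and hence the square $ABCABC$, a contradiction. So the paper proves the slightly stronger fact that $121$ or $131$ (hence the block $1213$) must actually occur in ${\bf w}$, which your proof does not deliver; since nothing later in the paper relies on that stronger fact, your streamlined version suffices for the corollary and for its use in Theorem~\ref{struct}.
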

\begin{proof}
We know that 
${\bf w}\in\{12,123,1232,13,1323\}^{\mathbb Z}.$ If neither of $121$ and $131$ is a factor of ${\bf w}$, then ${\bf w}$ is concatenated from copies of $A=1323$, $B=1232$ and $C=123$. However, $CB$ and $AC1$ contain squares, while $BA12$ contains 2132312, which cannot be a factor of a good word. This implies that $A$, $B$ and $C$  always occur in ${\bf w}$ in the cyclical order $A\rightarrow B\rightarrow C\rightarrow A$, and ${\bf w}$ contains the square $ABCABC$, which is impossible.
We conclude that one of 121 and 131 is a factor of ${\bf w}$. However, as in the proof of Claims~\ref{121} and \ref{131}, factors 12 and 13 can only occur in ${\bf w}$ in the context 1213, so the result follows.
\end{proof}

 By Corollary~\ref{parse w}, $f^{-1}({\bf w})$ exists. Let ${\bf v}\in f^{-1}({\bf w}).$
\begin{lemma} None of $ac$, $aba$, $bd$, $cb$, $da$ and $dc$ is a factor of ${\bf v}$. 
\end{lemma}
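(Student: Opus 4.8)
The plan is to show that each of the six forbidden factors $ac$, $aba$, $bd$, $cb$, $da$, $dc$ in ${\bf v}$ forces a forbidden configuration in ${\bf w}=f({\bf v})$, where by Corollary~\ref{parse w} we know ${\bf w}\in\{1213,123,1232,1323\}^{\mathbb Z}=f(\{a,b,c,d\})^{\mathbb Z}$. Recall $f(a)=1213$, $f(b)=123$, $f(c)=1323$, $f(d)=1232$. For each two-letter factor $xy$ on the list I would write down $f(x)f(y)$ explicitly and exhibit inside it either a square or a factor equivalent to $xyzxzyx$; for the three-letter factor $aba$ I would do the same with $f(a)f(b)f(a)=1213\,123\,1213$. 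This is a finite, purely mechanical check, but it is the whole content of the lemma, so I would present the cases reasonably carefully.

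Carrying it out: $f(a)f(c)=1213\,1323$ contains $131323$; here $13\,13$ is a square (the factor $1313$), so $ac$ is excluded. $f(b)f(d)=123\,1232$ contains the square $2\cdot12\cdot32$? — more carefully, $123\,1232=1231232$ contains $1231232$ whose suffix $31232$? I would instead note $f(b)f(d)=1231232$ contains the square $2323$? No; the clean statement is that $1231232$ contains $123\cdot 1232$ and the overlap $1231231$-type square or a $xyzxzyx$-pattern; I would pin down the exact factor in the writeup. Similarly $f(c)f(b)=1323\,123=1323123$ contains $3123\cdot$?, giving a square or the pattern $3123213$-type after noting context; $f(d)f(a)=1232\,1213=12321213$ contains $2\cdot12\cdot32$? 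Again, the honest approach is: for each pair, list $f(x)f(y)$ as an explicit length-$7$ or length-$8$ word and point to the offending length-$\le 8$ factor — either an explicit square (like $1313$, $2323$, $1212$, $3131$) or one of the six forbidden letter-pattern images listed in the Preliminaries. For $f(d)f(c)=1232\,1323=12321323$ I expect a square $3232$ or $1323$-doubling; for $aba$, $f(a)f(b)f(a)=12131231213$ I expect the pattern $xyzxzyx$ to appear straddling the boundaries.

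The step I expect to be the main obstacle is not difficulty but bookkeeping: making sure that in each case the offending factor genuinely lies within the displayed finite block $f(x)f(y)$ (or $f(x)f(y)f(x)$) and does not require knowledge of the letters of ${\bf v}$ beyond $xy$ (resp. $xyz$). In a couple of cases — plausibly $bd$ and $da$ — the forbidden configuration may only close up once one more block of ${\bf w}$ is appended, i.e.\ one may need to use that the block following $f(y)$ is again one of $1213,123,1232,1323$ and begins with $1$; since every block in the Corollary~\ref{parse w} factorization starts with $1$, this is automatic, but it should be stated. So the real work is: (i) write each $f(x)f(y)$ out, (ii) when an internal square or pattern is visible, cite it, (iii) otherwise append the forced initial $1$ (and if needed $2$, since the only blocks starting $1$ continue with $2$) of the next block and find the square or pattern there, always staying within a factor of bounded length so that goodness of ${\bf w}$ is contradicted. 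Once all six sub-cases are dispatched this way, the lemma follows, since ${\bf w}$ good forbids every one of these configurations.
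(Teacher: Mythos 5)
Your overall strategy is the same as the paper's: reduce each forbidden factor of ${\bf v}$ to a bounded check inside ${\bf w}=f({\bf v})$, using that ${\bf w}$ is good and that every block of the Corollary~\ref{parse w} parse begins with $1$. Five of the six cases do indeed fall to the mechanical check you deferred: $f(ac)=12131323$ contains the square $1313$; $f(aba)=12131231213$ contains $312312$; $f(bd)=1231232$ contains $123123$; $f(da)=12321213$ contains $2121$ (so, contrary to your guess, $bd$ and $da$ need no extra context); and $f(cb)1=13231231$ contains $231231$, the appended $1$ being forced because every block starts with $1$. Up to supplying these explicit factors, that part of your plan is sound and matches the paper.

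The genuine gap is the case $dc$. The word $f(dc)=12321323$ contains no square and no forbidden pattern, and neither does $f(dc)1=123213231$, so appending the forced $1$ does not close this case; moreover your justification for appending a $2$ as well (``the only blocks starting $1$ continue with $2$'') is false, since the block $1323$ begins with $13$. To finish you need one further argument: either look left --- every block other than $d=1232$ ends in $3$, and $d$ cannot precede $d$ (square $12321232$), so $3f(dc)=312321323$ is a factor of ${\bf w}$, whose prefix $3123213$ is $xyzxzyx$ with $x=3$, $y=1$, $z=2$ --- or look right, noting that the block after $c$ cannot be $1323$ (that would give the square $f(cc)$), hence begins with $12$, and $f(dc)12=1232132312$ contains the forbidden $2132312$. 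This is exactly the one delicate case; the paper also treats it by a contextual argument rather than a bare two-block check (and its printed version of that argument is itself garbled, speaking of $cd$ and of $1232$ being always followed by $13$, which fails in good words such as $f(g(acda))$), so a careful writeup should use one of the two arguments above. As it stands, your proposal leaves $dc$ unproved.
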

\begin{proof}
One checks that $f(ac)$, $f(aba)$, $f(bd)$, $f(cb)1$, $f(da)$ contain squares, and thus cannot be factors of ${\bf w}$. It follows that $ac$, $aba$, $cb$, $da$ and $dc$ are not factors of ${\bf v}$. On the other hand, as in the proof of the previous lemma, $f(d)=1232$ only appears in ${\bf w}$ in the context 123213. It follows that if $cd$ is a factor of ${\bf v}$, then $f(cd)13=1323123213$ is a factor of ${\bf w}$. However, this has the suffix $3123213=xyzxzyx$ where $x=3$, $y=1$, $z=2$. This is impossible.
\end{proof}

\begin{remark} It follows that ${\bf v}$ can be walked on the directed graph ${\mathscr D}$ of Figure~1. 
\end{remark}
\vspace{.1in}
\begin{figure}
\caption{The ${\mathbb Z}$-word ${\bf v}$ can be walked on the directed graph ${\mathscr D}$ below.}
\begin{tikzpicture}[->,shorten >=1pt,auto,node distance=3cm,
  thick,main node/.style={circle,fill=blue!20,draw,font=\sffamily\Large\bfseries},minimum size=1.2cm,every loop/.style={}]

  \node[main node] (1) {$a$};
  \node[main node] (2) [below of=1] {$b$};
  \node[main node] (3) [below right of=2] {$d$};
  \node[main node] (4) [below left of=2] {$c$};

  \path[every node/.style={font=\sffamily\small}]
    (1)  edge  node[left] {} (2)
        edge node[left] {} (3)
    (2)  edge node {} (4)
       edge node[left] {} (1)
    (3)  edge node {} (2)
    (4)  edge node {} (1)
       edge node[left] {} (3);
 
\end{tikzpicture}
\end{figure}
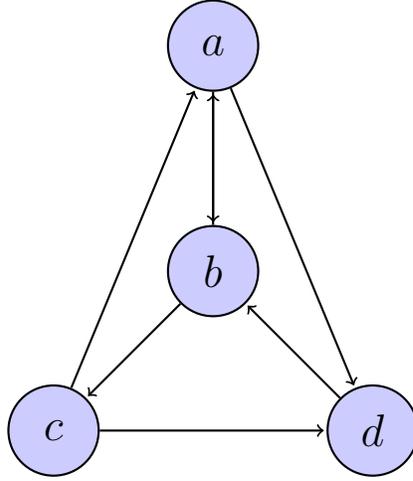

Let $h:\{a,b,c,d\}^*\rightarrow \{a,c,d\}^*$ be the morphism generated by $h(a)=a$, $h(b)=\epsilon$, $h(c)=c$, $h(d)=d$. Thus $h(w)$ is obtained by deleting all occurrences of $b$ in a word $w$. 

Suppose that $w$ is a factor of ${\bf v}$; this implies that $w$ can be walked on the directed graph ${\mathscr D}$. If $w$ does not begin or end with $b$, then
$$w=g(h(w)).$$
Let ${\bf u}=h({\bf v})\in U^{\mathbb Z}$. It follows that ${\bf v}=g({\bf u})$, so that
$${\bf w}=f(g({\bf u})).$$
Word ${\bf u}$ must be square-free; otherwise its image ${\bf w}$ contains a square. This the first alternative in Theorem~\ref{struct} holds.

The other situation occurs if we decide, after Lemma~\ref{1231 or 1321}, that ${\bf w}_{1231}=0.$ As we remarked at that point in our argument, 
this amounts to interchanging 2's and 3's, i.e., applying $\pi$. In such a case, we find that 
$${\bf w}=\pi(f(g({\bf u}))).$$
This completes the proof of Theorem~\ref{struct}.
\section{Proof of Theorem~\ref{existence}}
\begin{proof}[Proof of Lemma~\ref{factor-good}] 
Let $w=f(g(u))$. Each length 7 factor of $w$ is a factor of $f(g(u'))$, some factor $u'\in U^3$. A finite check establishes that $f(u')$ is factor-good for each $u\in U^3.$
\end{proof} 
\begin{proof}[Proof of Lemma~\ref{square-free}] 
Suppose for the sake of getting a contradiction, that $XX$ is a non-empty square in $w=f(v)$. If $|X|\le 2$, then $XX$ is a factor of $f(v')$, some factor $v'$ of $v$ with $|v'|= 2$. However,  we need only consider 

$$v'\in \{ab, ad, ba, bc, ca, cd, db\}.$$ 

(We can walk $v'$ on ${\mathscr D}$.) In each case, we check that $f(v')$ is square-free. From now on, then, suppose that $|X|\ge 3$; in this case we can write 

$$XX=qf(v_1v_2\cdots v_{n-1})p=qf(v_{n+1}v_{n+2}\cdots v_{2n-1})p,$$

  where $v_0v_1\cdots v_{n-1}v_nv_{n+1}v_{n+2}\cdots v_{2n-1} v_{2n}$ is a factor of $v$, $q$ is a suffix of $f(v_0)$, $p$ is a prefix of $f(v_{2n})$, $f(v_n)=pq$, and the $v_i\in T$. It follows that $v_i=v_{n+i}$, $1\le i\le n-1$.

If $v_0=v_n$, then $v$ contains the square $(v_0v_1v_2\cdots v_{n-1})^2$; similarly, if $v_n=v_{2n}$, then $v$ contains the square $(v_1v_2v_3\cdots v_{n})^2$. Since $v$ is square-free, we deduce that $v_n\ne v_0,v_{2n}$. From the condition that $f(v_n)$ is concatenated from a prefix of $v_{2n}$ and a suffix of $v_0$, where $v_n\ne v_0,v_{2n}$, we deduce that $v_n=b$.

From the definition of $g$ and the fact that $v_n=b$, it follows that $v_{n-1}v_nv_1\in\{abc, dba,dbc\}.$ If $v_{n-1}=d$, the definition of $g$ would force $v_n=v_{2n}=b$, contradicting $v_{2n}\ne v_n$. We conclude that 
 $v_{n-1}v_nv_1=abc.$
However, if $v_1=c$, the definition of $g$ forces $v_n=v_{0}=b$, contradicting $v_{0}\ne v_n$. 
\end{proof}
\section{Proof of Theorem~\ref{least}}
Let ${\bf u}$ be the lexicographically least square-free $\omega$-word over $U=\{a,c,d\}$, and let ${\bf t}=f(g({\bf u}))$. It follows that ${\bf u}$ has prefix $ac$, so that ${\bf t}$ has prefix
$p=f(g(ac))=f(abc)=12131231323$. A finite search shows that $p$ is the lexicographically least good word of length 11. It will therefore suffice to show that ${\bf t}$ is the lexicographically least good $\omega$-word with prefix $p$. 

Suppose that ${\bf t}_1$ is a good $\omega$-word with prefix $p$. By Lemma~\ref{1231 or 1321}, it follows that
$|{\bf t}_1|_{1321}=0$, and from the proof of Theorem~\ref{struct}, we conclude that ${\bf t}_1=f(g({\bf u_1}))$, for some square-free word ${\bf u_1}$. It remains to show that ${\bf u_1}$ is lexicographically greater than or equal to ${\bf u}$. Suppose not.

Since ${\bf t_1}$ has prefix $p$, word $ac$ must be a prefix of ${\bf u_1}$, and ${\bf u}$, ${\bf u_1}$ agree on a prefix of length at least 2. Let $qrs$ and $qrt$ be prefixes of ${\bf u_1}$ and ${\bf u}$, respectively, where $r,s,t\in\{a,c,d\}$, and $s$ is lexicographically less than $t$.
\begin{itemize}
\item If $r=a$, then we cannot have $s=a$, since ${\bf u_1}$ is square-free. We therefore must have $s=c$ and $t=d$. It follows that ${\bf t_1}$ has prefix $f(g(qa)bc)=f(g(qa))1231323$, and ${\bf t}$ has prefix $f(g(qa)d)=f(g(qa))1232$, and we see that ${\bf t_1}$ is lexicographically less than ${\bf t}$. This contradicts the minimality of ${\bf t}$.
\item If $r=c$, then we must have $s=a$ and $t=d$. It follows that ${\bf t_1}$ has prefix $f(g(qca))=f(g(qc))1213$, and ${\bf t}$ has prefix $f(g(qcd))=f(g(qc))1232$, and again ${\bf t_1}$ is lexicographically less than ${\bf t}$, giving a contradiction.
\item If $r=d$, then we must have $s=a$ and $t=c$. It follows that ${\bf t_1}$ has prefix $f(g(qd)ba)=f(g(qd))1231213$, and ${\bf t}$ has prefix $f(g(qd)bc)=f(g(qc))1231323$, and again ${\bf t_1}$ is lexicographically less than ${\bf t}$.
\end{itemize}
We conclude that ${\bf u_1}$ is lexicographically greater than or equal to ${\bf u}$, and ${\bf u}$ is the lexicographically least square-free $\omega$-word over $U$, as claimed.
\section{Proof of Theorem~\ref{growth}}
Let $C(n)$ be the number of length $n$ square-free words over $U$. As shown by Brandenburg \cite{brandenburg}, for $n>2$, $C(n)\ge 6\left(2^{n\over 21}\right)$.
The map $f\circ g$ is injective. Since $g$ simply adds $b$'s between some pairs of letters, $|u|\le |g(u)|< 2|u|$; also, $3|u|\le |f(u)|\le 4|u|$. Let $u\in U^*$ be square-free. By the Lemmas~\ref{factor-good} and \ref{square-free}, $f(g(u))$ is good. Also,
$3|u|\le |f(g(u))|< 8|u|$. We deduce that distinct square-free words over $U$ of lengths between 3 and $(n+1)/8$ correspond to distinct good words of lengths between 9 and $n$.
It follows that
$$\sum_{i=3}^{\lfloor (n+1)/8\rfloor}6\left(2^{n\over 22}\right)\le \sum_{i=9}^n G(i),$$
and the theorem follows with $A=\sum_{i=0}^8 G(i)$, $B=6$ and $C=2^{1\over 22}$.

\end{document}